\documentclass{tlp-hacked}
\pdfoutput=1

\usepackage{amsmath}
\usepackage{amssymb} 
\usepackage{stmaryrd}
\usepackage{graphicx}
\usepackage{tikz}
\usetikzlibrary{arrows}
\usetikzlibrary{topaths}
\usetikzlibrary{calc}
\usepackage{ifthen} 
\usepackage{aliascnt} 
\usepackage{hyperref}
\hypersetup{
  breaklinks=true
}
\usepackage[sort]{cleveref} 
\usepackage{enumerate}

 \newcommand{\removed}[1]{{\color{gray}\xout{#1}}}

\newcommand\Prog{{\mathcal P}}

\renewcommand{\removed}[1]{}

 \newcommand\sstate[1]{\left< {#1} \right>}

\newcommand\wrt{w.r.t\@.}

\newcommand\Newtheorem[2]{
  \newaliascnt{#1}{theorem} 
  \newtheorem{#1}[#1]{#2}  
  \aliascntresetthe{#1} 
}

\newtheorem{theorem}{Theorem}
\Newtheorem{lemma}{Lemma}
\Newtheorem{proposition}{Proposition}
\Newtheorem{corollary}{Corollary}
\Newtheorem{definition}{Definition}
\Newtheorem{example}{Example} 
\Newtheorem{remark}{Remark}
\crefname{section}{Sect.}{Sect.}

\renewcommand\cref[1]{\Cref{#1}}

\newcommand\ra{\rightarrow}

\newcommand\ie{i.e\@.}

\newcommand\R{\mathcal R}

\newcommand\Y{{\cal Y}}

\renewcommand\AA{{\mathbb A}}
\newcommand\BB{{\mathbb B}}
\newcommand\CC{{\mathbb C}}
\newcommand\DD{{\mathbb D}}
\newcommand\EE{{\mathbb E}}
\newcommand\FF{{\mathbb F}}
\newcommand\GG{{\mathbb G}}
\newcommand\HH{{\mathbb H}}
\newcommand\KK{{\mathbb K}}

\newcommand\fv{\textup{fv}}

\newcommand\CT{{\mathcal{C}}}
\newcommand\CP{\CT\Prog}

\newcommand\lv{\textup{lv}}

\newcommand\at{{\;@\;}}

\renewcommand\Prog{{\mathcal P}}
\newcommand\Q{{\mathcal Q}}

\newcommand\simplif{~{\Longleftrightarrow}~}
\newcommand\propag{~{\Longrightarrow}~}

\newcommand\default[2]{\ifthenelse{\equal{#1}{}}{#2}{#1}}

\newcommand\estate[3][]{{{\ifthenelse{\equal{#1}{}}{}{\exists {#1}}}\left< {#2};{#3} \right>}}

\newcommand\state[3]{\langle \default{#1}{\emptyset};\default{#2}{\true}
;\default{#3}{\emptyset}
 \rangle}

\newcommand\true{\top}

\renewcommand\vec[1]{{\bar{#1}}}

\usepackage{amsfonts} 

\makeatletter
\def\twoheadrightarrowfill@{%
  \arrowfill@\relbar\relbar\twoheadrightarrow
}

\newcommand{\xtwoheadrightarrow}[2][]{%
   \ext@arrow  0359\twoheadrightarrowfill@{#1}{#2}%
}

\def\twoheadleftarrowfill@{%
  \arrowfill@\twoheadleftarrow\relbar\relbar
}

\newcommand{\xtwoheadleftarrow}[2][]{%
   \ext@arrow  0359\twoheadleftarrowfill@{#1}{#2}%
}

\makeatother

\newcommand\rap[1][\Prog]{\xrightarrow{_{#1}}}
\newcommand\rapsymb[2]{\xrightarrow{_{#1}}{\!\!}^{#2}\,}
\newcommand\rape[1][\Prog]{\rapsymb{#1}{\equiv}}
\newcommand\rapt[1][\Prog]{\rapsymb{#1}{\text{+}}}
\newcommand\raps[1][\Prog]{\xtwoheadrightarrow{_{#1}}}

\newcommand\lap[1][\Prog]{\xleftarrow{_{#1}}}
\newcommand\lapsymb[2]{\xleftarrow{_{#1}}{\!\!}^{#2}\,}
\newcommand\lape[1][\Prog]{\lapsymb{#1}{\equiv}}

 \newcommand\laps[1][\Prog]{\xtwoheadleftarrow{_{#1}}}

{

\newcommand\Bp[1][\Y]{{T^\CT_\Prog}}

\newcommand\cexists[1]{\exists_{\,\text{-}{#1}}}

\newcommand\prth[1]{{\left( {#1} \right)}}

\newcommand\esum[2]{\ifthenelse{\equal{#2}{}}{[#1]}{[#1|#2]}}

\renewcommand\geq\geqslant
\renewcommand\subset{\subseteq}

\newcommand\la{\leftarrow}

\newcommand{\ras}{\twoheadrightarrow}
\newcommand{\rat}{\ra^+}
\newcommand{\rae}{\ra^\equiv}
\newcommand{\las}{\twoheadleftarrow}
\newcommand{\lae}{\la^\equiv}

\def\ra{\rightarrow}



\newcommand\eq{\,\textup{\tt =}\,}

\def\R{{\mathcal R}}

\begin{document}

\submitted{25 March 2012}
\revised{11 June 2012}
\accepted{18 June 2012}

\newcommand\thankstxt{The research leading to these results has
  received funding from the Programme for Attracting Talent  /
  young PHD of the MONTEGANCEDO Campus of International Excellence (PICD),
  the Madrid Regional Government under the CM project P2009/TIC/1465
  (PROMETIDOS), and the Spanish Ministry of Science under the MEC
  project TIN-2008-05624 (DOVES).}

\title[Diagrammatic confluence for {C}onstraint~{H}andling~{R}ules]
{Diagrammatic confluence for {C}onstraint~{H}andling~{R}ules\textsuperscript{\thanks{\thankstxt}}}
\author[R\'emy Haemmerl\'e]{R\'emy Haemmerl\'e\\
Technical University of Madrid}

\maketitle

\begin{abstract}

  Confluence is a fundamental property of {C}onstraint~{H}andling~{R}ules
  ({CHR}) since, as in other rewriting formalisms, it guarantees that the
  computations are not dependent on rule application order, and also
  because it implies the logical consistency of the program
  declarative view.
  In this paper we are concerned with proving the confluence of
  non-terminating CHR programs.
  For this purpose, we derive from van Oostrom's decreasing diagrams
  method a novel criterion on CHR critical pairs that generalizes all
  preexisting criteria.%
  We subsequently improve on a result on the modularity of CHR
    confluence, which permits modular combinations of possibly
    non-terminating confluent programs, without loss of confluence.

\begin{keywords}
{CHR}, confluence, decreasing diagrams, modularity of confluence.
\end{keywords}

\end{abstract}

\newcommand\mycite[2]{\citeANP{#1}#2~\citeyear{#1}}

\section{Introduction}
\label{section:introduction}

{C}onstraint~{H}andling~{R}ules ({CHR}) is a committed-choice constraint
logic programming language, introduced by
\citeNS{fru_chr_overview_jlp98} for the easy development of constraint
solvers. It has matured into a general-purpose concurrent programming
language. Operationally, a {CHR} program consists of a set of guarded
rules that rewrite multisets of constrained atoms.
Declarative\-ly, a {CHR} program can be viewed as a set of logical
implications executed on a deduction principle.

\medskip 

Confluence is a basic property of rewriting systems{}. 
It refers to the fact that any two finite computations starting from a common
state can be prolonged so as to eventually meet in a common state again.
Confluence is an important property for any rule-based language,
because it is desirable for computations to not be dependent on 
a particular rule application order. 
In the particular case of {CHR}, this property is even more desirable,
as it guarantees the correctness of a
program~\cite{AFM99constraints,HaemmerleGH11ppdp}: any program
confluent  has a consistent
logical reading.
Confluence of a {CHR} program is also a fundamental prerequisite for 
logical completeness results \cite{AFM99constraints,Haemmerle11tplpA},
makes possible program
parallelization~\cite{fru_parallel_union_find_iclp05,meister_preflow_push_wlp06},
and may simplify program equivalence
analyses~\cite{abd_fru_equivalence_cp99,Haemmerle11tplpB}.

\medskip

Following the pioneering research of \citeNS{AFM96cp}, most 
existing work dealing with the confluence of CHR limits itself to
terminating programs (see for instance the works by
\citeNS{Abdennadher97cp} and \citeNS{DSS07iclp}). %
Nonetheless, proving confluence without global termination
 assumptions
is still a worthwhile
objective.

From a theoretical point of view, this is an interesting topic,
because, as illustrated by the following example typical CHR programs
fail to terminate on the level of abstract semantics, even if
they do terminate on more concrete levels. %
Indeed, number of analytical results for the language rest on the
notion of confluence, but only when programs are considered with
respect to abstract semantics. %
For instance, in the current state of knowledge, even a result as
important as the guarantee of correction by confluence only holds when
programs are considered with respect to the most general operation
semantics for CHR, namely the {\em very abstract semantics}.

\newcommand\mleq[2]{{#1} \leq {#2}}%

\begin{example}[Partial order constraint]
\xdef\labelExampleLeq{\thetheorem}
\label{example:leq}
\renewcommand\Prog{\mathcal P_{\labelExampleLeq}}%
Let $\Prog$ be the classic CHR introductory example, namely the
constraint solver for partial order. This consists of the following
four rules, which define the meaning of the {\em user-defined} symbol
$\mathord{\mleq{}{}}$ using the built-in equality constraint $\eq{}{}$: 
\[
  \begin{array}{lcl}
 \textit{duplicate} &\at& \mleq{x}{y} \,\backslash \, \mleq{x}{y} \simplif \top \\
 \textit{reflexivity} &\at& \mleq{x}{x} \simplif \top \\
 \textit{antisymmetry} &\at& \mleq{x}{y}, \mleq{y}{x} \simplif x \eq y \\
 \textit{transitivity} &\at& \mleq{x}{y}, \mleq{y}{z} \propag\mleq{x}{z} \\
  \end{array}
\]

The $\textit{duplicate}$ rule implements so-called {\em duplicate
 removal}. In other words, it states that if two copies of the same
user-defined atom are present, then one of them can be removed. The
\textit{reflexivity} and \textit{transitivity} rules respectively state
that any atom of the form $\mleq{x}{x}$ can be removed, and that two
atoms $\mleq{x}{y}$ and $\mleq{x}{y}$ can be substituted with the
built-in constraint $x \eq y$. Finally, the \textit{transitivity} rule
is a {\em propagation rule}. It states that if 
$\mleq{x}{y}$ and $\mleq{y}{z}$ are present, then the atom
$\mleq{x}{z}$ may be added.

It is well know that this program, like any other program using
propagation rules, faces the so-called {\em trivial non-termination
  problem} when considered with respect to the very abstract
semantics. Indeed, for these semantics, a propagation rule applies to
any state it produces, leading to trivial loops. %
In order to solve this problem, \citeNS{Abdennadher97cp} proposed a
token-based semantics in which propagation rules may be applied only
once to the same combination of atoms. %
Nonetheless, such a proposal does not solve all the problems of
termination. Indeed 
the \textit{transitivity} rule may loop on queries containing a cycle in a
chain of inequalities when considered against
\citeANP{Abdennadher97cp}'s semantics. Consider, for instance, the
query $\mleq{x}{y}, \mleq{y}{x}$.

In fact, in order for $\Prog$ to be terminating, the rules of \textit{reflexivity},
\textit{antisymmetry}, and \textit{transitivity} must have
priority over the \textit{transitivity} rule. This behaviour can be
achieved by considering concrete semantics, such as the refined semantics
of \linebreak \citeNS{duck_stuck_garc_holz_refined_op_sem_iclp04}. These
semantics reduce the non-determinism of the CHR execution model by
applying the rules in textual order.

In exchange for gaining termination, the most concrete semantics lose
a number of analytical results. For instance, as explained by
\citeNS{Fruehwirth09cambridge}, although any CHR program can be run in
parallel in abstract semantics, one can obtain incorrect results for
programs written with the refined semantics in mind. Indeed, if the
result of a program relies on a particular rule application order,
parallel execution will garble this order, leading to unexpected
results. Interestingly, confluence on an abstract (but possibly
non-terminating) level may come to the rescue of the most concrete
semantics: If a program is confluent on a semantic level where the
rule application order is not specified, then the result will not be
dependent on the particular application
order. 
Similar considerations have been discussed for equivalences of CHR
programs~\cite{Haemmerle11tplpB}.

 \end{example}

 From a more practical point of view, proving confluence without
 the assumption of termination is important, because 
 it may be desirable to prove the confluence of a program for which termination cannot
 be inferred. Indeed, there exist very simple programs, such as the Collatz
 function, for which termination is only a conjecture~\cite{guy2004unsolved}. Furthermore,
 since CHR is now a general-purpose language, analytical tools for the
 language must handle programs that do not terminate on any semantic
 level---for instance, interpreters for a Turing-complete
 language~\cite{sney_schr_demoen_chr_complexity_toplas09}, or 
 typical concurrent programs (see the numerous examples of concurrent
 systems given by \citeNS{Milner99cambridge}). %
 We have also recently demonstrated
 that non-terminating execution models for CHR yield elegant
 frameworks for programming with coinductive
 reasoning~\cite{Haemmerle11tplpA}. %
 As a motivating example for the class of intrinsically
 non-terminating programs, we will use the following solution for the
 seminal dining philosophers problem. %

\begin{example}[Dining philosophers]
\newcommand\frk{\textbf{frk}}%
\newcommand\eat{\textbf{eat}}%
\newcommand\thk{\textbf{thk}}%
\renewcommand\frk{\textbf{f}}%
\renewcommand\eat{\textbf{e}}%
\renewcommand\thk{\textbf{t}}%
\xdef\labelExamplePhilos{\thetheorem}%
\label{example:chr:philos}%
\renewcommand\Prog{\mathcal P_{\labelExamplePhilos}}%
Consider the following {CHR} program $\Prog$ that implements a 
solution to the dining philosophers problem extended to count the number of times a
philosopher eats:
\[
\begin{array}{lllll}
 eat  &\at&  \thk(x, y, i), \frk(x), \frk(y) &\simplif& \eat(x, y, i+1) \\
 thk  &\at&  \eat(x, y, i) &\simplif& \frk(x), \frk(y), \thk(x, y, i)
\end{array}
\]
The atom $\frk(x)$ represents the fork $x$, the atom $\eat(x, y, i)$
(resp.\ $\thk(x, y, i)$) represents an eating (thinking) philosopher
seated between forks $x$ and $y$, who has already eaten $i$ times. On
the one hand, the rule $eat$, states that if a thinking philosopher is
seated between two forks lying on the table, then he may start eating
once he has picked up both forks. On the other hand, the rule $thk$
states that a philosopher may stop eating if he puts down the forks he
has been using. %
The initial state corresponding to $n$ dining philosophers seated
around a table can be encoded by the set of atoms $\frk(1), \thk(1, 2,
0), \frk(2), \thk(2, 3, 0), \cdots \frk(n), \thk(n, 1, 0)$.

Despite the fact that this program is intrinsically non-terminating, we may be
interested in its confluence, for example, so that we may make use of one of 
the previously mentioned applications (e.g. confluence simplifies observational
equivalence~\cite{Haemmerle11tplpB}). Confluence of $\Prog$ may also
simplify the proofs of fundamental properties of concurrent systems,
such as, for instance, the absence of deadlock: Starting from the
initial state, one can easily construct a derivation where the
$i^\text{th}$ philosopher ($i \in 1,\dots, n$) has eaten an arbitrary
number of times. Hence if $\Prog$ is confluent, we can then infer that it is
possible to extend any finite derivation such that the $i^\text{th}$
philosopher eats strictly more, \ie\ no derivation leads to a
deadlock.
\end{example}

\smallskip

To the best of our knowledge, the only existing principle for proving
confluence of non-terminating programs is the so-called {\em strong
  confluence} criterion
\cite{HF07rta,raiser_tacchella_confluence_non_terminating_chr07}.
However this criterion appears to be too weak to apply to common CHR
programs, such as \cref{example:leq,example:chr:philos}. In this paper, we are concerned with
extending CHR confluence theory to be able to capture a large class of
possibly non-terminating programs.
For this purpose we derive from the so-called {\em decreasing
  diagrams} technique a novel criterion that generalizes all existing
confluence criteria for {CHR}.
The decreasing diagrams technique is a method developed by
\citeN{diagrams} which subsumes all sufficient conditions for
confluence. Applying this method requires that all local rewrite peaks
(\ie\ points where the rewriting relation diverges because of
non-determinism) can be completed into so-called decreasing diagrams.

\medskip 

The present paper presents two main contributions.  In
\cref{section:confluence}, we present a particular instantiation of
the decreasing diagrams technique to {CHR}, and show that in the
context of this particular instantiation, the verification of
decreasingness can be restricted to the standard notion of critical
pairs.  Then in \cref{section:modularity}, we extend the so-called
{\em modularity of confluence} \cite{Fruehwirth09cambridge} so as to
be able to combine programs which have independently been proven
confluent, without losing confluence.

\section{Preliminaries on abstract confluence}
\label{section:abstract}

In this section, we gather some required notations, definitions, and results
on the confluence of abstract rewriting systems.
\citeANP{Terese03}'s compendium
\citeNN{Terese03} can be referred to for a more detailed presentation.

\medskip

A {\em rewrite relation} (or {\em rewrite} for short) is a binary
relation on a set of objects $E$.
For any rewrite $\mathord{\ra}$, the symbol $\mathord{\la}$ will denote its
converse, $\mathord{\rae}$ its reflexive closure, %
$\mathord{\rat}$ its transitive closure, %
and $\mathord{\ras}$ its transitive-reflexive closure. %
We will use $\mathord{\ra_\alpha} \cdot \mathord{\ra_\beta}$ to
denote the left-composition of all rewrites $\mathord{\ra_\alpha}$ and
$\mathord{\ra_\beta}$.  A {\em family} of rewrites is a set
$(\mathord{\ra_\alpha})_{\alpha \in I}$ of rewrites indexed by a set $I$ of
labels. %
For such a family and any set $K$, $\mathord{\ra_K}$ will denote the union
$\bigcup _{\alpha \in (K \cap I)} \prth{\mathord{\ra_\alpha}}$.

 %
A {\em
    reduction} is a finite sequence of rewriting steps of the form
  $(e_0 \rap_{\alpha_1}e_1 \rap_{\alpha_2} \cdots \rap_{\alpha_n}
  e_n)$. Such a reduction would be abreviated as $e_0 \rap_{\vec
    \alpha} e_n$ with $\vec \alpha = \alpha_1,\alpha_2, \dots
  ,\alpha_n$ when the intermediary states $e_1, \dots, e_{n-1}$ are not relevant.
  A {\em peak} is a pair of reductions $e_l \lap_{\vec \alpha} e
  \rap_{\vec \beta} e_r$ from a common element $e$.  A {\em
    local peak} is a peak formed by two one-step reductions. A {\em
    valley} is a pair of reductions $e_l\rap_{\vec \alpha} e'
  \lap_{\vec \beta} e_r$ ending in a common element $e'$.
  A peak $e_l \lap_{\vec \alpha} e \rap_{\vec \alpha'} e_r$ \linebreak
  is {\em joinable} by $\mathord{ \rap_{\beta}} \cdot
  \mathord{\lap_{\beta'}}$ if it is true that $e_l \rap_{\beta} \cdot \lap_{\beta'}
  e_r$.

 A rewrite $\mathord{\ra}$ is {\em terminating} if there is no infinite sequence
 of the form $e_0 \ra e_1 \ra e_2 \dots$
Furthermore, we will say that $\mathord{\ra}$ is {\em confluent} if 
$(\mathord{\las} \cdot
\mathord{\ras}) \subset (\mathord{\ras} \cdot \mathord{\las})$ holds, {\em locally confluent} if $(\mathord{\la}
\cdot \mathord{\ra}) \subset (\mathord{\ras} \cdot \mathord{\las})$ holds, and {\em strongly
  confluent\footnote{For the sake of simplicity, we use a definition
    weaker than the one of \citeN{Huet80jacm}. %
    It is worth noting, that the counterexamples given in introduction
    stay relevant for the general definition.  } } %
if $(\mathord{\la} \cdot \mathord{\ra}) \subset (\mathord{\rae} \cdot
\mathord{\lae})$ holds.
\cref{fig:arr:confluence,fig:arr:local,fig:arr:strong} graphically
represent these definitions.  Following standard diagrammatic
notation, solid edges stand for universally quantified rewrites,
while dashed edges represent existentially quantified rewrites.

\begin{figure}
\figrule
\begin{minipage}[b]{0.26\linewidth}\centering
\begin{tikzpicture}[scale=1]
  \node[inner sep=3] (e) at (0,1) {};
  \node[inner sep=3](e1) at (-1,0) {};
  \node[inner sep=3](e2) at (1,0) {};
  \node[inner sep=3](e') at (0,-1) {};
 \draw[->>,>=angle 90] (e) to 
  (e1);
\draw[->>,>=angle 90] (e) to  
(e2);

\draw[dashed,->>,>=angle 90](e1) to
(e'); 
\draw[dashed,->>,>=angle 90](e2) to 
(e');
\end{tikzpicture}   
   \caption{Confluence}
\label{fig:arr:confluence}
\end{minipage}
\hfill
\begin{minipage}[b]{0.33\linewidth}
\centering
\begin{tikzpicture}[scale=1]
  \node[inner sep=1.5] (e) at (0,1) {};
  \node[inner sep=1.5](e1) at (-1,0) {};
  \node[inner sep=1.5](e2) at (1,0) {};
  \node[inner sep=3](e') at (0,-1) {};

\draw[->,>=angle 90] (e) to 
  (e1);
\draw[->,>=angle 90] (e) to  
(e2);

\draw[dashed,->>,>=angle 90](e1) to
(e'); 
\draw[dashed,->>,>=angle 90](e2) to
(e'); 
\end{tikzpicture}   
   \caption{Local Confluence}
\label{fig:arr:local}  
\end{minipage}
\hfill
\begin{minipage}[b]{0.34\linewidth}
\centering
\begin{tikzpicture}[scale=1]
  \node[inner sep=2] (e) at (0,1) {};
  \node[inner sep=2](e1) at (-1,0) {};
  \node[inner sep=2](e2) at (1,0) {};
  \node[inner sep=2](e') at (0,-1) {};

\draw[->,>=angle 90] (e) to 
  (e1);
\draw[->,>=angle 90] (e) to  
(e2);

\draw[dashed,->,>=angle 90](e1) to
   node [inner sep=3, sloped, above] {$\equiv$} 
(e'); 
\draw[dashed,->,>=angle 90](e2) to
   node [inner sep=3, sloped, above] {$\equiv$} 
(e'); 
\end{tikzpicture}   
   \caption{Strong confluence}
\label{fig:arr:strong}
\end{minipage}
\figrule
 \end{figure}

 By the seminal lemma of \citeN{Newman42am}, we know that a
  terminating and locally confluent rewrite is confluent. Another
  famous result due to \citeN{Huet80jacm} ensures that strong
  confluence implies confluence.  

\renewcommand\succeq{\succcurlyeq}
\newcommand\succeqR[1][]{\,\text{\rotatebox[origin=c]{270}{$\succeq$}}\!_{#1}\,}
\newcommand\succR[1][]{\,\text{\rotatebox[origin=c]{270}{$\succ$}}\!_{#1}\,}

\medskip

We now present a slight variation due to \citeN{HM10ijcar} of the
so-called decreasing diagrams technique, which is more suitable for
our purposes.
The interest of the decreasing diagrams method~\cite{diagrams} is that
it reduces problems of general confluence to problems of local
confluence. In exchange, the method requires the {\em confluence
  diagrams} (\ie\ the way peaks close) to be decreasing with respect
to a labeling provided with a wellfounded preorder. The method is
complete in the sense that any countable confluence rewrite can be
equipped with such a labeling. But because confluence is an
undecidable property, finding such labeling may be difficult.

In the rest of this paper, we will say
that a preorder $\succeq$ is {\em wellfounded}, if the strict
preorder $\succ$ associated with $\succeq$ (\ie\ $\alpha \succ \beta$
iff $\alpha \succeq \beta$ but not $\beta \succeq \alpha$) is
a terminating relation.
\label{defn:decreasing}
Let $(\ra_\alpha)_{\alpha \in I}$ be a family of rewrites and
$\succeq$ be a wellfounded preorder on $I$. A local peak $e_l
\la_\alpha e \ra_\beta e_r$  ($\alpha, \beta \in I$)  is {\em decreasing} with respect to
$\succeq$ if the following holds:
\begin{equation*}
\tag{$\star$}
   e_l \ras_{\succR\{\alpha\}} \cdot \rae_{\succeqR\{\beta\}}\cdot
\ras_{\succR\{\alpha, \beta\}} e' \las_{\succR\{\alpha, \beta\}}
\cdot\lae_{\succeqR\{\alpha\}}\cdot \las_{\succR\{\beta\}} e_r %
\end{equation*}
where for any set $K$ of labels, $\succeqR K$ stands for
$\left\{\gamma \in I \mid \exists \delta \in K . \delta \succeq \gamma
\right\}$ and $\succR K$ for $\left\{\gamma \in I \mid \exists \delta
  \in K . \delta \succ \gamma \right\}$.  A family
$(\ra_\alpha)_{\alpha \in I}$ of rewrites is {\em (locally)
  decreasing} if all local peaks of the form $u \la_\alpha \cdot
\ra_\beta v$ ($\alpha, \beta \in I$) are decreasing with respect to a
common wellfounded preorder on $I$.  A rewrite is {\em (locally)
  decreasing} if it is the union of some decreasing families of
rewrites.  Property $(\star)$ is graphically represented in
\cref{fig:diagram:decreasingness}.
 
\begin{figure}
\figrule
\centering
\begin{tikzpicture}[xscale=2,yscale=1.9]
  \node[inner sep=1.5] (e) at (0,1) {};
  \node[inner sep=1.5](e1) at (-1,0) {};
  \node[inner sep=1.5](e2) at (1,0) {};
  \node[inner sep=3](e') at (0,-1) {};
  \node[inner sep=1.5](e11) at ($(e1)!0.333!(e')$) {};
  \node[inner sep=1.5](e12) at ($(e1)!0.666!(e')$) {};
  \node[inner sep=1.5](e21) at ($(e2)!0.333!(e')$) {};
  \node[inner sep=1.5](e22) at ($(e2)!0.666!(e')$) {}; 

\draw[->,>=angle 90] (e) to 
  node [sloped, below] {${}_\alpha$}
  (e1);
\draw[->,>=angle 90] (e) to  
   node [sloped, below] {${}_\beta$}
(e2);

\draw[dashed,->>,>=angle 90](e1) to
   node [sloped, below] {${}_{\succR\{\alpha\}}$}
(e11); 
\draw[dashed,->,>=angle 90](e11) to 
   node [inner sep=3, sloped, above] {$\equiv$} 
   node [sloped, below] {${}_{\succeqR\{\beta\}}$} (e12);
\draw[dashed,->>,>=angle 90](e12) to 
 node [sloped, below] {${}_{\succR\{\alpha,\beta\}}$}
(e'); 

\draw[dashed,->>,>=angle 90](e2) to 
node [sloped, below] {${}_{\succR\{\beta\}}$}
 (e21);
\draw[dashed,->,>=angle 90](e21) to 
 node [inner sep=3, sloped, above] {$\equiv$} 
 node [sloped, below] {${}_{\succeqR\{\alpha\}}$}
 (e22);
\draw[dashed,->>,>=angle 90](e22) to 
  node [sloped, below] {${}_{\succR\{\alpha,\beta\}}$}
 (e');
\end{tikzpicture}   
   \caption{Local decreasingness}
\label{fig:diagram:decreasingness}  
\figrule
 \end{figure}

\begin{theorem}[Decreasing Diagram \cite{diagrams}] 
\label{thm:decreasing}
A countable rewrite  is confluent if and only if it is locally decreasing.
\end{theorem}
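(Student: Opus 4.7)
The plan is to prove both directions of the equivalence, following van Oostrom's original approach; the substantive direction is $(\Leftarrow)$.

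For $(\Rightarrow)$, assuming a countable confluent rewrite $\ra$ on a set $E$, I would exhibit a labeling making $\ra$ locally decreasing. Enumerate $E$ as $e_1, e_2, \dots$ and label each step $e \ra e'$ by the index of its source in this enumeration, taking the standard well-order on $\N$ as $\succeq$. Confluence provides a joining valley for every local peak, and the condition $(\star)$ is trivially verified since $\N$ is totally well-ordered, so every label appearing in the closing valley is comparable to $\alpha$ and $\beta$.

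For $(\Leftarrow)$, I would prove confluence by showing that every conversion $e_l \lr^* e_r$ admits a valley $e_l \ras \cdot \las e_r$. Associate to each conversion the multiset of its step labels and order conversions by the multiset extension of $\succ$, which remains wellfounded by the Dershowitz--Manna theorem. I would then proceed by wellfounded induction: given a non-valley conversion, select an occurrence of a local peak $e_l \lan\alpha e \ran\beta e_r$, and replace it with the pattern $(\star)$ supplied by the hypothesis. A careful accounting of the new labels shows that the overall label multiset has strictly decreased, so the induction hypothesis delivers the desired valley.

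The main obstacle is the strict decrease of the multiset under this replacement. The subtlety is that $(\star)$ allows a single ``middle'' step whose label lies in $\succeqR\{\beta\}$ (respectively $\succeqR\{\alpha\}$), which may coincide with $\alpha$ (respectively $\beta$) and so does not shrink in $\succ$. The resolution exploits the fact that each $\succeqR$-labeled step appears on the side \emph{opposite} to the label it matches: one can pair such a middle step up with the $\alpha$ (respectively $\beta$) that was consumed on the other side of the peak, while all remaining new labels lie in $\succR\{\alpha\} \cup \succR\{\beta\}$ and are therefore strictly dominated. This is precisely the asymmetry baked into the formulation of $(\star)$, and it is the delicate technical core of the proof; rather than redo this termination analysis in detail, I would invoke van Oostrom's original argument.
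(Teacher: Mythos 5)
The paper does not actually prove this theorem: it is imported verbatim from van Oostrom's work via the citation \cite{diagrams} and used as a black box, so there is no in-paper proof to compare yours against. Judged on its own terms, your proposal has a genuine gap in each direction. For $(\Rightarrow)$, labelling each step by the index of its source in an arbitrary enumeration of $E$, ordered by $\leq$ on $\N$, does not make a confluent rewrite locally decreasing. Condition $(\star)$ does not ask that the labels in the closing valley be \emph{comparable} to $\alpha$ and $\beta$ --- in a total order everything is comparable, which would make this direction vacuous --- it asks that, apart from the two single $\rae$ steps, they be \emph{strictly smaller}. An arbitrary joining valley supplied by confluence passes through states whose enumeration indices can be arbitrarily large, so nothing forces the required decrease. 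The real completeness argument is precisely where countability is used: one constructs a cofinal reduction in each connected component and derives the labels from it; that construction is absent from your sketch.

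For $(\Leftarrow)$, the measure you propose --- the plain multiset of all step labels of a conversion under the multiset extension of $\succ$ --- simply does not decrease when a local peak is replaced by the pattern $(\star)$. The minimal counterexample is the commuting square: a peak $e_l \la_\alpha e \ra_\beta e_r$ closed as $e_l \ra_\beta \cdot \la_\alpha e_r$ is a legitimate instance of $(\star)$ (empty $\ras_{\succR\{\cdot\}}$ parts, both $\rae$ parts used), and it replaces the multiset $\{\alpha,\beta\}$ by $\{\beta,\alpha\}$; if moreover any $\ras_{\succR\{\cdot\}}$ part is non-empty, the multiset strictly \emph{increases}, since adding elements enlarges a multiset in the multiset order. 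Your ``pairing'' remark correctly locates the tension, but pairing a middle label against the label consumed on the opposite side is not an argument that a non-decreasing measure decreases; van Oostrom's proofs avoid the problem by using a genuinely different measure (roughly, keeping from each leg only the labels not strictly dominated by an earlier label of that leg, and combining the two legs' measures). Since you explicitly defer that termination analysis to van Oostrom, the delicate core of the only substantive direction is not actually carried out: what you have is a correct statement of the proof strategy rather than a proof.
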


\medskip

We recall now some other state-of-the-art results 
which
will be used later.

\begin{lemma}[\citeNP{Terese03}]
\label{lemma:terese}
\begin{enumerate}[{\it (i)}]
\item 
  For all rewrites $\mathord{\ra_1}$, $\mathord{\ra_2}$ if
  $(\la_1\cdot\ras_2)\subset (\ras_2\cdot\las_1)$, then
  $(\las_1\cdot\ras_2)\subset (\ras_2\cdot\las_1)$.

\item For all rewrites $\mathord{\ra_1}$,$\mathord{\ra_2}$ s.t.\ 
  $\mathord{\ra_1} \subset \mathord{\ra_2} \subset \mathord{\ras_1}$,
    $\mathord{\ra_2}$ is confluent iff $\mathord{\ra_1}$ is confluent.
\end{enumerate}
\end{lemma}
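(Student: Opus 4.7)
The plan is to treat the two parts separately, since each uses a different piece of abstract rewriting machinery. Part $(i)$ is a standard ``tiling'' induction, while part $(ii)$ is simply an observation about closure operators.

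For $(i)$, I would proceed by induction on the length $n$ of the $\las_1$-reduction on the left of the peak. For $n=0$ we have $e_l = e$, so the peak trivially closes via $e_l \ras_2 e_r$. For the inductive step, decompose the peak as $e_l \lan{1} e' \las_1 e \ras_2 e_r$. The induction hypothesis applied to the shorter peak $e' \las_1 e \ras_2 e_r$ produces some $e''$ with $e' \ras_2 e'' \las_1 e_r$. The resulting local peak $e_l \lan{1} e' \ras_2 e''$ is then closed by the hypothesis $(\lan{1}\cdot\ras_2) \subset (\ras_2\cdot\las_1)$, yielding an $e'''$ with $e_l \ras_2 e''' \las_1 e''$. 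Concatenating the two $\las_1$-segments gives $e_l \ras_2 \cdot \las_1 e_r$, as required.

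For $(ii)$, I would first establish that the reflexive--transitive closures coincide, namely $\mathord{\ras_1} = \mathord{\ras_2}$. Taking closures of $\mathord{\ra_1} \subset \mathord{\ra_2}$ yields $\mathord{\ras_1} \subset \mathord{\ras_2}$; conversely, from $\mathord{\ra_2} \subset \mathord{\ras_1}$ and the idempotence of the reflexive--transitive closure operator one obtains $\mathord{\ras_2} \subset \mathord{\ras_1}$. Since confluence of a rewrite $\mathord{\ra}$ is defined purely in terms of $\mathord{\ras}$, namely as $(\mathord{\las}\cdot \mathord{\ras}) \subset (\mathord{\ras}\cdot \mathord{\las})$, the confluence of $\mathord{\ra_1}$ is equivalent to that of $\mathord{\ra_2}$.

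I do not foresee any genuine obstacle here: both claims are essentially bookkeeping lemmas. The only subtlety in $(i)$ is to carry out the induction on the side of the peak whose remaining single-step divergence can be discharged by the asymmetric hypothesis; inducting on the other side would require an additional, separate lifting argument.
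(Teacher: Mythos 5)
Your proof is correct. The paper gives no proof of this lemma --- it is cited directly from \citeANP{Terese03}'s compendium --- and your argument (the standard induction on the length of the $\las_1$-side of the peak for part \emph{(i)}, discharging the resulting single-step divergence with the asymmetric hypothesis, and the observation that $\mathord{\ras_1}=\mathord{\ras_2}$ by monotonicity and idempotence of reflexive--transitive closure for part \emph{(ii)}) is exactly the standard one found there.
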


\section{Preliminaries on {C}onstraint~{H}andling~{R}ules}
\label{section:preliminaries}

In this section, we recall the syntax and the semantics of {CHR}.
\citeANP{Fruehwirth09cambridge}'s book
\citeyear{Fruehwirth09cambridge} can be referred to for a more general
overview of the language.

\subsection{Syntax}

The formalization of {CHR} assumes a language of \emph{(built-in)
  constraints} containing equality over some theory $\CT$, and defines
\emph{(user-defined) atoms} using a different set of predicate
symbols. %
In the following, $\R$ will denote an arbitrary set of identifiers.
By a slight abuse of notation, we allow confusion of conjunctions and
multiset unions, omit braces around multisets, and use the comma for
multiset union. %
We use $\fv(\phi)$ to denote the set of free variables of a formula
$\phi$. %
The notation $\cexists{\psi} \phi$ denotes the existential closure of
$\phi$ with the exception of free variables of $\psi$.

\medskip

A {\em ({CHR}) program} is a finite set of eponymous rules of the form:
 \[( r
\at \KK \backslash \HH \simplif \GG \mid \BB; \CC )\]
where $\KK$ (the {\em kept head}), $\HH$ (the {\em removed head}), and
$\BB$ (the {\em user body}) are multisets of atoms, $\GG$ (the {\em
  guard}) and $\CC$ (the {\em built-in body}) are conjunctions of
constraints and, $r\in \R$ (the {\em rule name}) is an identifier
assumed unique in the program.
Rules in which both heads are empty are prohibited.  An empty guard
$\top$ (resp. an empty kept head) can be omitted with
the symbol~$\mid$ (resp.\@ with the symbol~$\backslash$). %
Rules are divided into two classes: {\em simplification
  rules}\footnote{Unlike standard presentations, our definition does not distinguish between simplification rules form the so-called simpagation rules.}  if the
removed head is non-empty and {\em propagation rules} otherwise.
Propagation rules can be written using the alternative syntax: \[( r
\at \KK \propag \GG \mid \BB; \CC{} ) \]

\subsection{Operational semantics}
\label{subsection:chr:operational}

In this section, we recall the equivalence-based operational semantics
$\omega_e$ of \citeN{RBF09chr}. It is equivalent to the {\em very
  abstract} semantics $\omega_{va}$ of \citeN{fru_chr_overview_jlp98},
which is the most general operational semantics of {CHR}. We prefer
the former because it includes an rigorous notion of equivalence,
which is an essential component of confluence analysis.

\medskip 

A {\em ({CHR}) state} is a tuple $\state{\CC}{\EE}{\bar x}$, where
$\CC$ (the {\em user store}) is a multiset of atoms, $\EE$ (the {\em
  built-in store}) is a conjunction of constraints, and $\vec x$ (the
{\em global variables}) is a finite set of variables. %
Unsurprisingly, the {\em local variables} of a state are those
variables of the state which are not global. %
  When no confusion can occur, we will syntactically merge user and
  built-in stores. We may futhermore omit the global variables
  component when states have no local variables.
%
 %
In the following, we use $\Sigma$ to denote the set of states. %
Following \citeANP{RBF09chr}, we will always implicitly
consider states modulo a structural equivalence. Formally, this {\em
  state equivalence} is the least equivalence relation $\equiv$ over
states satisfying the following rules:
\label{definition:equivalence}
   \begin{itemize}
 \item
  \label{preliminaries:abstract:equivalence:transformation}
 $\state{\EE}{\CC}{\vec x} \equiv \state{\EE}{\DD}{\vec x}$ if 
$\CT \vDash \cexists{\prth{\EE,\vec x}}{
    \CC} \leftrightarrow \cexists{\prth{\EE,\vec x}}{\DD}$
  \item \label{preliminaries:abstract:equivalence:inconsistent}
$\state{\EE}{\bot}{\vec x}\equiv \state{\FF}{\bot}{\vec y}$ 
  \item \label{preliminaries:abstract:equivalence:equality} $
     \state{\AA, c}{\CC, c\mathord{=}
       d}{\vec x} \equiv \state{\AA, d}{\CC,  c \mathord{=} d}{\vec x}$
 \item \label{preliminaries:abstract:equivalence:omission}
$\state{\AA}{\CC}{ \vec x} \equiv
  \state{\AA}{\CC}{\{y\} \cup \vec x}$ if  $y \notin \fv(\AA, \CC)$.
\end{itemize}

Once states are considered modulo equivalence, the operation semantics
of {CHR} can be expressed by a single rule.  Formally the 
operational semantcs 
of a program $\Prog$ is given by the least relation $\rap$ on
states satisfying the rule:
\begin{gather*}
\frac{
\prth{r \at \KK \backslash \HH \simplif \GG | \BB; \CC}  \in \Prog\rho
 \quad \lv(r) \cap \fv(\EE, \DD, \vec x) = \emptyset
}{ %
\state{\KK, \HH, \EE}{ \GG,  \DD}{ \vec x}
\rap
\state{\KK, \BB, \EE}{\GG, \CC, \DD}{\vec x}
}
\end{gather*}
where $\rho$ is a renaming. A program $\Prog$ is {\em confluent} (resp.\@
{\em terminating}) if $\rap$ is confluent (resp.\@ terminating).

\medskip 

Before going further, we recall an important property of {CHR}
semantics. This property, monotonicity, means that if a transition is
possible in a state, then the same transition is possible in
any larger state. To help reduce the level of verbosity
we introduce the notion of the {\em quantified
  conjunction} of states~\cite{HF07rta}.
This operator allows the composition of states with disjoint local
  variables while quantifying some of their global variables (\ie\
  changing global variables into local ones).  Formally,
  the quantified conjunction is a binary operator on states
  parametrized by a set of variables $\vec z$ satisfying:
\[
\state{\EE}{\CC}{\vec x} \oplus_{\vec z} \state{\FF}{\DD}{\vec y} =
\state{\EE,\FF}{\CC,\DD}{(\vec x\vec y) \setminus \vec z}
 \text{ if }  (\fv(\EE,\CC) \cap \fv(\FF,\DD)) \subset (\vec x \cap \vec y) 
\] 
Note the side condition is not restrictive, as local variables can
always be renamed using the implicit state equivalence.

\begin{proposition}[Monotonicity of {CHR}]
  Let $\Prog$ be a {CHR} program, $S$, $S_1$, $S_2$ be {CHR} states, and
  $\vec x$ be a set of variables.
\[
\text{ If } S_1 \rap S_2 \text{, then } S_1 \oplus_{\vec x} S \rap S_2
\oplus_{\vec x} S
\]
\end{proposition}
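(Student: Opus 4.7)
The plan is to unfold the definition of the transition relation $\rap$ and then apply \emph{the very same rule instance} to the augmented state $S_1 \oplus_{\vec x} S$, handling variable clashes through the implicit state equivalence.

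First, I would invoke the inference rule defining $\rap$ to obtain a renaming $\rho$, a rule instance $(r \at \KK \backslash \HH \simplif \GG \mid \BB; \CC) \in \Prog\rho$, and representatives $S_1 \equiv \state{\KK,\HH,\EE}{\GG,\DD}{\vec y}$ and $S_2 \equiv \state{\KK,\BB,\EE}{\GG,\CC,\DD}{\vec y}$ satisfying the side condition $\lv(r) \cap \fv(\EE,\DD,\vec y) = \emptyset$. Writing $S \equiv \state{\UU}{\VV}{\vec z}$ and exploiting the state-equivalence rule that allows inserting or removing fresh global variables, I would refresh the choice of $\rho$ so that the local variables of $r$ are additionally disjoint from $\fv(\UU,\VV,\vec z)$, and simultaneously $\alpha$-rename the local variables of $S$ so that the side condition of the quantified conjunction, namely $\fv(\KK,\HH,\EE,\GG,\DD) \cap \fv(\UU,\VV) \subset \vec y \cap \vec z$, is secured. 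Unfolding $\oplus_{\vec x}$ then yields
\[
S_1 \oplus_{\vec x} S \equiv \state{\KK,\HH,\EE,\UU}{\GG,\DD,\VV}{(\vec y\, \vec z) \setminus \vec x}, \qquad S_2 \oplus_{\vec x} S \equiv \state{\KK,\BB,\EE,\UU}{\GG,\CC,\DD,\VV}{(\vec y\, \vec z) \setminus \vec x}.
\]
The freshness arranged for $\lv(r)$ immediately implies the extended side condition $\lv(r) \cap \fv(\EE,\UU,\DD,\VV,(\vec y\, \vec z) \setminus \vec x) = \emptyset$, so exactly the same instance of $r$ fires on $S_1 \oplus_{\vec x} S$ and produces $S_2 \oplus_{\vec x} S$, as required.

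The main obstacle, and indeed the only genuine work in the argument, is the bookkeeping of variables. On the one hand, the renaming $\rho$ witnessing $S_1 \rap S_2$ is only constrained to avoid $\fv(\EE,\DD,\vec y)$; I need to strengthen it to also avoid $\fv(\UU,\VV,\vec z)$, which is harmless because $\lv(r)$ can always be renamed fresh. On the other hand, the side condition built into $\oplus_{\vec x}$ may fail for arbitrary representatives, but it can always be re-established by $\alpha$-renaming the local variables of $S$ via the equivalence rule $\state{\AA}{\CC}{\vec x} \equiv \state{\AA}{\CC}{\{y\} \cup \vec x}$ for $y \notin \fv(\AA,\CC)$. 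Once these renamings are in place, the conclusion reduces to a single direct application of the {CHR} transition rule.
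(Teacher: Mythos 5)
The paper states this proposition without proof: monotonicity is imported as a known property of the equivalence-based semantics (it goes back to the cited literature on $\omega_{va}$/$\omega_e$), so there is no in-paper argument to compare yours against. Your direct unfolding of the single transition rule is the standard proof of this fact, and it is sound: the only real content is the variable bookkeeping, and you identify and discharge both obligations correctly (refreshing $\rho$ on $\lv(r)$ so that the local variables of the rule instance also avoid $\fv(\UU,\VV,\vec z)$ --- harmless since $\lv(r)$ is disjoint from the head and its occurrences in the guard can be renamed modulo the built-in-store equivalence --- and $\alpha$-renaming the local variables of $S$ to meet the side condition of $\oplus_{\vec x}$). One small point you leave implicit: you should also check that the \emph{target} conjunction $S_2 \oplus_{\vec x} S$ is well-formed for the chosen representatives, i.e.\ that $\fv(\KK,\BB,\EE,\GG,\CC,\DD)\cap\fv(\UU,\VV)\subset \vec y\cap\vec z$; this follows from the same freshness arrangement, since every variable of $S_2$ not already occurring in $S_1$ lies in $\lv(r)$. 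With that remark made explicit, the argument is complete.
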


\subsection{Declarative semantics}
\label{subsection:preliminaries:declarative}

Owing to its origins in the tradition of CLP, the {CHR} language
features declarative semantics through direct interprestation in
first-order logic.  Formally, the {\em logical reading} of a rule of
the form:
\[ 
   \KK \backslash \HH \simplif \GG \mid  \BB;\CC
\]
is the guarded equivalence:
\[
  \forall \prth{\prth{\KK \wedge \GG} \rightarrow \prth{\HH
      \leftrightarrow \cexists{\prth{\KK, \HH}}\prth{\GG \wedge \CC
        \wedge \BB}}}
\]
The {\em logical reading} of a program $\Prog$ within a theory $\CT$ is
the conjunction of the logical readings of its rules with the
constraint theory $\CT$. It is denoted by $\CP$. 

\medskip

Operational semantics is sound and complete with respect to this
declarative semantics~\cite{fru_chr_overview_jlp98,AFM99constraints}.
Furthermore, any program confluent with respect to $\omega_e$
 has a consistent logical
reading~\cite{AFM99constraints,HaemmerleGH11ppdp}.

\section{Diagrammatic confluence for {C}onstraint {H}andling {R}ules} 
\label{section:confluence}

In this section, we are concerned with proving confluence of a large
class of {CHR} programs. Indeed, as explained in the introduction,
existing criteria are not sufficiently powerful to infer confluence of
common non-terminating programs.  (See
\cref{example:diagrams:leq,example:diagrams:philos} for 
concrete examples). To avoid this limitation, we will derive from the
decreasing diagrams technique a novel csriterion on {CHR} critical
pairs that generalizes both local and strong confluence criteria.
 An analogue
 criterion has been developed for linear {T}erm {R}ewriting {S}ystems (TRS)
 \cite{JvO09icalp}.

\subsection{Labels for {C}onstraint {H}andling {R}ules}

In order to apply the decreasing diagram technique to {CHR}, we will
need first to label {CHR} transitions. In this work, we will use two
labelings proposed by \citeN{vanOostrom08rta} for 
{TRS}.  The first one is the so-called {\em
  rule-labeling}. 
It consists of labeling each transition $a \rap b$ with the name of
the applied rule. %
This labeling is ideal for capturing strong confluence-like properties
for linear TRS.  Within the proof of our main result, we will also use
the so-called {\em self-labeling} which consists of labeling each
transition $a \rap b$ with its source $a$. This second labeling
captures the confluence of terminating rewrites.

\medskip

In practice, we will assume that the set $\R$ of rule identifiers is
defined as a disjoint union $\R_i\uplus \R_c$. For a given program
$\Prog$, we denote by $\Prog^i$ (resp.\ $\Prog^c$) the set of rules form $\Prog$ built with
$\R_i$ (resp. $\R_c$). %
We call $\Prog^i$ the {\em inductive} part of $\Prog$, because we will
subsequently assume that $\Prog^i$ is terminating, while $\Prog^c$ will be
called {\em coinductive}, as it will be
typically  non-terminating.

\begin{definition}[Rule-labeling]
\label{def:chr:labels}
The {\em rule-labeling} of a {CHR} program $\Prog$ is the family of
rewrites $(\mathord{\rap_{r}})_{r\in \R}$ indexed by rule identifiers,
where $\mathord{\rap_{r}} = \mathord{\rap[\{r\}]}$.  A preorder
$\succeq$ on rule identifiers is {\em admissible}, if any inductive
rule identifier is strictly smaller than any coinductive one (\ie\
for any $r_i \in \R_i$ and any $r_c \in  \R_c$,  $r_c \succ r_i$ holds).
\end{definition}

\subsection{Critical peaks}

In TRS, the basic techniques used to prove confluence consist of
showing various confluence criteria on a finite set of special cases,
called {\em critical pairs}. 
Critical pairs are
generated by a superposition algorithm, in which one attempts to
capture the most general way the left-hand sides of the two rules of
the system may overlap. The notion of critical pairs has been successfully
adapted to {CHR} by \citeN{AFM96cp}. Here, we introduce a slight
extension of the notion that takes into account the rule-labeling we
have just defined.

\begin{definition}[Critical peak]
  Let us assume that $r_1$ and $r_2$ are {CHR} rules renamed apart:
\[
\left( r_1 \at \KK_1 \backslash \HH_1 \simplif \GG_1 \mid \BB_1; \CC_1 \right) \in \Prog_1 
\qquad \qquad 
\left( r_2 \at \KK_2 \backslash \HH_2 \simplif \GG_2 \mid \BB_2; \CC_2 \right) \in \Prog_2
\]
A {\em critical ancestor (state)} $S_c$ for the rules $r_1$ and $r_2$ is
a state of the form:
\[
S_c = \state{\HH^\Delta_1,\HH_1^\cap, \HH_2^\Delta}{\DD}{\vec x}
\]
 satisfying the following properties:
\begin{itemize}
 \item $(\KK_1, \HH_1) \dot = (\HH_1^\Delta, \HH_1^\cap)$, 
       $(\KK_2, \HH_2) \dot = (\HH_2^\Delta, \HH_2^\cap)$,  
       $\HH_1^\cap  \not = \emptyset$, and $\HH_2^\cap \not = \emptyset$;
     \item  $\vec {x_1} =
       \fv(\KK_1, \HH_1)$, $\vec {x_2} = \fv(\KK_2, \HH_2)$ and $\vec x = \vec {x_1} \cup \vec {x_2}$;
\item $\DD = (\HH_1^\cap\mathord{\dot =}  \HH_2^\cap, \GG_1, \GG_2)$ and 
 $\exists \DD$ 
is $\CT$-satisfiable;
\item $\HH^\cap_1 \not \subset \KK_1$ or $\HH^\cap_2 \not \subset \KK_2$.
 \end{itemize}
 Then the following tuple is called a {\em critical peak} between
 $r_1$ and $r_2$ at $S_c$:
\[
\state{\KK_1, \BB_1, \HH_2^\Delta}{\DD, \CC_1}{\vec x}
 \lap_{r_1} S_c \rap_{r_2} \state{\KK_2, \BB_2,
   \HH_1^\Delta}{\DD, \CC_2}{\vec x}
\]
A critical peak between a program $\Prog$ and a program $\Q$ is a
critical peak between a rule of $\Prog$ and a rule of $\Q$. A critical
peak of a program $\Prog$ is a critical peak between $\Prog$ and itself. %
 A critical peak is {\em inductive} if it involves only inductive rules
(\ie\ a critical peak of $\Prog^i$), or {\em coinductive} if it involves
at least one coinductive rule (\ie\ a critical peak between $\Prog^c$
and $\Prog$).
\end{definition}

\begin{example}
\label{example:peak:leq}
\renewcommand\Prog{\mathcal P_{\labelExampleLeq}}%
Consider the solver partial order $\Prog$, given in \cref{example:leq}.
The following ciritial peak stems from overlapping the heads of the
rules $\textit{antisymmetry}$ and $\textit{transitivity}$:
\[
\sstate{x\eq y} \lap_\textit{anti.} \sstate{\mleq{x}{y}, \mleq{y}{x}} \rap_\textit{trans.}
\sstate{\mleq{x}{y}, \mleq{y}{x}, \mleq{x}{x}}
\]
\end{example}

\subsection{Rule-decreasingness}

We now come to our main result, showing that the study of
decreasingness with respect to the rule-labeling can be restricted to critical
peaks without loss of generality.

\begin{definition}[Critical rule-decreasingness]
\label{defn:chr:decreasing}
A program $\Prog$ is {\em (critically) rule-decreasing} \wrt\ 
  an
admissible preorder $\succeq$  if:
\begin{itemize}
\item the inductive part of $\Prog$ is terminating,
\item all inductive critical peaks of $\Prog$ are joinable by $\raps[\Prog^i]\cdot  \laps[\Prog^i]$, and
\item all coinducitve critical peaks of $\Prog$ are decreasing \wrt\ $\succeq$.
\end{itemize}
A program is {\em rule-decreasing} if it is rule-decreasing with
respect to some admissible preorder.  %
A rule-decreasing program is {\em strongly rule-decreasing} if
  it is purely coinductive (\ie\ without inductive rules).
\end{definition}

\begin{theorem}
\label{thm:chr:critical}
Rule-decreasing programs are confluent.
\end{theorem}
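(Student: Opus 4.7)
The plan is to apply \cref{thm:decreasing} to $\rap$ by exhibiting a labeling under which every local peak is decreasing. Since the hypothesis supplies decreasingness only for coinductive critical peaks under rule-labeling, I will supplement it with van Oostrom's \emph{self-labeling} on the inductive part. Concretely, label each inductive transition $S \rap[\Prog^i] T$ by the pair $(r, S)$, where $r$ is the applied rule and $S$ its source, and label each coinductive transition by its rule name $r$. Extend the admissible preorder $\succeq$ so that every coinductive label strictly dominates every inductive label, and so that $(r_1, S_1) \succ (r_2, S_2)$ iff $S_1 \rapt[\Prog^i] S_2$. Termination of $\Prog^i$, finiteness of $\R$, and wellfoundedness of $\succ$ on $\R_c$ together ensure the strict part of this combined preorder is wellfounded.

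The next step is the standard overlap analysis of \cite{AFM96cp}, which, combined with monotonicity of CHR, shows that every local peak $S_1 \lap_\alpha S \rap_\beta S_2$ is either (a) a \emph{trivial peak}, in which the two rule applications have disjoint removed heads or only purely-kept overlap, or (b) a contextual instance (via $\oplus_{\vec z}$ with some state $C$) of a critical peak between the applied rules. A trivial peak closes by a single commutation $S_1 \rap_\beta S' \lap_\alpha S_2$, whose one-step-each-way diamond fits the middle $\rae$ leg of pattern $(\star)$ under any labeling and is therefore decreasing. For contextual instances, coinductive critical peaks are decreasing by hypothesis, and inductive critical peaks admit a join $\raps[\Prog^i] \cdot \laps[\Prog^i]$ by hypothesis; every step in this join carries a self-label $(r, T)$ with $S \rapt[\Prog^i] T$, so both $\alpha$ and $\beta$ strictly dominate $(r, T)$ in the combined preorder. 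The join therefore fits inside the outer $\raps_{\succR\{\alpha,\beta\}}$ and $\laps_{\succR\{\alpha,\beta\}}$ legs of $(\star)$, with the other legs taken empty. Monotonicity lifts each critical diagram to its contextual instance, preserving rule names and strict self-label relations, since $S \rapt[\Prog^i] T$ yields $S \oplus_{\vec z} C \rapt[\Prog^i] T \oplus_{\vec z} C$.

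The main obstacle I anticipate lies in the bookkeeping that makes monotonicity-lifting precise when inductive and coinductive labelings are interleaved: one must verify that $(\star)$ is preserved under $\oplus_{\vec z}$ not only at the level of rule names (immediate) but also for the inductive self-labels, carefully invoking monotonicity on \emph{strict} inductive reductions rather than on individual steps. A secondary checkpoint is the countability requirement of \cref{thm:decreasing}, which follows from countability of the signature and the constraint theory. Once these are discharged, local decreasingness of $\rap$ is established, and confluence follows immediately from \cref{thm:decreasing}.
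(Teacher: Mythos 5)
Your proposal is correct and follows essentially the same route as the paper's proof: a hybrid labeling (self-labeling on the inductive part, rule-labeling on the coinductive part), a combined wellfounded preorder in which coinductive rule labels strictly dominate all inductive labels and inductive labels are ordered by $\rapt[\Prog^i]$, followed by the disjoint-vs-overlap case analysis with monotonicity lifting. The only cosmetic difference is that you label inductive steps by pairs $(r,S)$ where the paper uses the source state $S$ alone, which changes nothing in the argument.
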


\begin{proof*}
  Let us assume that $\Prog$ is a rule-decreasing program  \wrt\ a given
  preorder $\succeq_\R$. 
  Now let $(\rap_\alpha)_{\alpha \in (\Sigma \cup \R_c)}$, the family
  of rewrites indexed by rule or state, be defined as
  
\[
\mathord{\rap_\alpha} = 
\begin{cases}
\rap[\phantom.\Prog^i\phantom.]  \cap \prth{\left\{\alpha\right\} \times \Sigma} & \text{ if } \alpha \in \Sigma   \qquad \qquad \text{(self-labeling on inductive part)} \\
\rap[\{\alpha\}] & \text{ if } \alpha \in \R_c \qquad \quad\;\;  \text{(rule-labeling on coinductive part)} 
\end{cases}
\]
Let $\succeq$ be the union of $\succeq_\R$, $\rapt[\Prog^i]$, and $\{(r,
\alpha) \mid r\in \R \; \& \; \alpha \in I\}$. By assuming without
loss of generality that $\R$ is finite (\ie\ $\succeq_\R$ is trivially
wellfounded), we obtain that $\succeq$ is wellfounded.
  With the help of \cref{thm:decreasing}, it suffices to prove that each
  peak $S_\alpha \lap_\alpha \!S\! \rap_\beta S_\beta$ ($\alpha,
  \beta \in (\R_c \cup \Sigma)$) is decreasing \wrt\ $\succeq$. We
  distinguish two cases:
\begin{enumerate}[1]
\item The rules $r_\alpha$ and $r_\beta$ used to respectively produce
  $S_\alpha$ and $S_\beta$ apply to different parts of $S$.  By
  monotonicity of {CHR} transitions, we infer $S_\alpha
  \rap[\{r_\beta\}] S' \lap[\{r_\alpha\}] S_\beta$.  We have to show
  this {\em valley} respects property $(\star)$ within the definition
  of the decreasing diagrams.  We proceed by cases on the types of the
  rules $r_\alpha$ and $r_\beta$:
  \begin{enumerate}[{1}.1]
  \item $r_\alpha$ is inductive. We have $\alpha=S$, $\alpha \rapt[\Prog^i]
    S_\alpha$, and $S_\beta \rap_{S_\beta} S'$.
\begin{enumerate}[{1.1}.1]
\item $r_\beta$ is inductive. We have $\beta=S$, $\beta \rap[\Prog^i]
  S_\beta$, and $S_\alpha \rap_{S_\alpha} S'$. Since $\Prog^i$ is
  terminating, we infer $\alpha \succ S_\alpha$ and $\beta \succ
  S_\beta$. We conclude $S_\alpha \rap_{S_\alpha} S' \lap_{S_\beta}
  S_\beta$, \ie\ the peak is decreasing \wrt\ $\succeq$.
\item \label{symmetric} $r_\beta$ is coinductive. We have $\beta \in \R_c$, $S_\alpha
  \rap_{\beta} S'$, and $\beta \succ S_\alpha$. We conclude $S_\alpha
  \rap_{S_\alpha} S' \lap_{\beta} S_\beta$, \ie\ the peak is
  decreasing \wrt\ $\succeq$.
  \end{enumerate}
\item $r_\alpha$ is coinductive. We have $\alpha \in \R_c$ and $S_\beta
  \rap_{\alpha} S'$.
\begin{enumerate}[{1.2}.1]
\item $r_\beta$ is inductive. The case is symmetric with case 1.1.2.
\item $r_\beta$ is coinductive. We have $\beta \in \R_c$ and $S_\alpha
  \rap_{\beta} S'$. We conclude $S_\alpha
  \rap_{\alpha} S' \lap_{\beta} S_\beta$, \ie\ the peak is
  decreasing \wrt\ $\succeq$.
  \end{enumerate}
\end{enumerate}

\item The applications of the rules $r_\alpha$ and $r_\beta$ used to respectively
  produce $S_\alpha$ and $S_\beta$ overlap. There should exist a critical peak
  \( R_\alpha \lap_{r_\alpha} S_c \rap_{r_\beta} R_\beta\), a state $R$, and a
  set of variables $\vec y$, such that $S \equiv S_c \oplus_{\vec x}
  R$, $S_\alpha \equiv R_\alpha \oplus_{\vec x} R$, and $R_\beta \equiv R_\beta
  \oplus_{\vec x} R$. We proceed by cases on the types of rules $r_\alpha$ and $r_\beta$:
 \begin{enumerate}[{2}.1]
 \item Both rules are inductive: We have $\beta=\alpha=S$, and by hypothesis we have
\[
R_\alpha \equiv R_\alpha^0\rap[\Prog^i] R_\alpha^1\rap[\Prog^i] \cdots S^m_\alpha \equiv S'  \equiv R^n_\beta \cdots \lap[\Prog^i] R_\beta^1 \lap[\Prog^i] R_\beta^0 
\equiv R_\beta
\]
By monotony of {CHR} we infer:
\[
S_\alpha \equiv S_\alpha^0\rap[\Prog^i] S_\alpha^1\rap[\Prog^i] \cdots S^m_\alpha \equiv S  \equiv S^n_\beta \cdots \lap[\Prog^i] S_\beta^1 \lap[\Prog^i] S_\beta^0 
\equiv S_\beta
\]
where $S_\alpha^i = R_\alpha^i\oplus_{\vec x} R$ (for $i\in 0, \dots
m$), $S_\beta^i = R_\beta^i\oplus_{\vec x} R$ (for $i\in 0, \dots
n$), and $S=S'\oplus R$. By construction of  $(\rap_\alpha)_{\alpha \in \Sigma \times \R_c}$ we get:
\[
S_\alpha \rap_{S_\alpha^0} S_\alpha^1\rap{S_\alpha^1} \cdots S^m_\alpha \equiv S  \equiv S^n_\beta \cdots \lap_{ S_\beta^1}S_\beta^1 \lap_{ S_\beta^0} S_\beta
\]
To conclude about the discussion of the decreasingness of the peak, it is just necessary
to notice that for any $i\in 0, \dots m$ and any $j\in  0, \dots n$,  both
$S \rapt  S_\alpha^i$ and $S \rapt  S_\beta^j$ hold, \ie\ $S_\alpha^i, S_\beta^j \in \succR\{\alpha, \beta\}$.
 \item One of the rules is coinductive.
By hypothesis we have 
\[
R_\alpha
 \ras_{\succR\{r_1\}} \cdot \rae_{\succeqR\{r_2\}}\cdot
\ras_{\succR\{r_1, r_2\}} \cdot \las_{\succR\{r_1, r_2\}}
\cdot\lae_{\succeqR\{r_1\}}\cdot \las_{\succR\{r_2\}} R_\beta
\]
or equivalently by monotony of {CHR}:
\[
\quad
S_\alpha
 \ras_{\succR\{r_1\}} \cdot \rae_{\succeqR\{r_2\}}\cdot
\ras_{\succR\{r_1, r_2\}} \cdot \las_{\succR\{r_1, r_2\}}
\cdot\lae_{\succeqR\{r_1\}}\cdot \las_{\succR\{r_2\}} S_\beta
\quad
\mathproofbox \]
\end{enumerate}
\end{enumerate}

\end{proof*}

\newcommand\diamondTikz[4]{
  \node[inner sep=3] (e) at (0,1) {$S$};
  \node[inner sep=3](e1) at (-1,0) {$S_\alpha$};
  \node[inner sep=3](e2) at (1,0) {$S_\beta$};
  \node[inner sep=3](e') at (0,-1) {$S'$};
 \draw[->,>=angle 90] (e) to 
   node [inner sep=5, below] {$#1$}
  (e1);
\draw[->,dragan] (e) to  
   node [inner sep=5,  below] {$#2$}
(e2);

\draw[dashed,->,dragan](e1) to
   node [inner sep=5,  below] {$#3$} 
(e'); 
\draw[dashed,->,>=angle 90](e2) to 
   node [inner sep=5,  below] {$#4$} 
(e');
}

\cref{thm:chr:critical} strictly subsumes all the criteria for proving
confluence of {CHR} programs we are aware of, namely the local
confluence~\cite{AFM99constraints} and the strong
confluence~\cite{HF07rta} 
criteria.

\begin{corollary}[Local confluence]
\label{corollary:decreasing:local}
A terminating program $\Prog$ is confluent if its critical peaks are
joinable by $\raps[\Prog]\!\cdot \laps[\Prog]$.
\end{corollary}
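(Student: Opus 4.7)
The plan is to obtain this corollary as an immediate specialization of \cref{thm:chr:critical} by collapsing the coinductive/inductive distinction to the degenerate case in which every rule is treated as inductive. Concretely, I would work under the assumption (without loss of generality, since the partition $\R = \R_i \uplus \R_c$ is a modeling choice) that $\R_c = \emptyset$, so that $\Prog^c = \emptyset$ and $\Prog^i = \Prog$.

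With this choice, the notion of admissible preorder imposes no constraint on $\succeq$ (the universal quantifier over $r_c \in \R_c$ is vacuous), so any wellfounded preorder on $\R$ will serve as witness. I then verify the three clauses of \cref{defn:chr:decreasing}: the inductive part $\Prog^i = \Prog$ is terminating by the hypothesis of the corollary; every inductive critical peak is joinable by $\raps[\Prog^i]\!\cdot\laps[\Prog^i]$, which is exactly the hypothesis on critical pairs, since $\Prog^i = \Prog$; and the condition on coinductive critical peaks is vacuously satisfied because no rule of $\Prog$ is coinductive, hence $\Prog$ admits no coinductive critical peak. Consequently $\Prog$ is rule-decreasing, and \cref{thm:chr:critical} delivers confluence.

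No step here is an obstacle: the corollary is genuinely a one-line consequence of \cref{thm:chr:critical}. The only thing worth checking carefully is that the vacuous reading of the third clause is warranted, \ie\ that a critical peak is called coinductive only when it truly involves a rule from $\Prog^c$, which is the case by the stated definition.
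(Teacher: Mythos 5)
Your proposal is correct and is exactly the argument the paper intends: the corollary is stated as an immediate instance of \cref{thm:chr:critical} obtained by declaring every rule inductive ($\R_c=\emptyset$), so that termination of $\Prog^i=\Prog$ and joinability of the (all inductive) critical peaks give rule-decreasingness, with the coinductive clause holding vacuously. Your explicit check that admissibility and the coinductive condition become vacuous is the only point needing care, and you handle it correctly.
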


\begin{corollary}[Strong confluence]
\label{corollary:decreasing:strong}
A  program $\Prog$ is confluent if its critical peaks are joinable by $\rape[\Prog]\!\cdot \lape[\Prog]$.
\end{corollary}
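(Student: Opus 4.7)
The plan is to derive this corollary as an instance of \cref{thm:chr:critical} by exhibiting an admissible preorder under which $\Prog$ is rule-decreasing. The natural idea is to classify every rule of $\Prog$ as coinductive, taking $\R_i = \emptyset$ and $\R_c = \R$, so that $\Prog^i = \emptyset$ and $\Prog^c = \Prog$. With this partition the first two clauses of \cref{defn:chr:decreasing} are vacuously satisfied: $\Prog^i$ is trivially terminating and generates no inductive critical peaks to join.

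The remaining work is to pick a wellfounded admissible preorder $\succeq$ on $\R$ under which every critical peak is decreasing. I would choose $\succeq$ to be the \emph{trivial total} preorder in which $r \succeq r'$ holds for every pair $r, r' \in \R$. Since $\R_i$ is empty, admissibility is vacuous; and since the associated strict part $\succ$ is empty, it is trivially terminating, hence $\succeq$ is wellfounded. For this preorder one has $\succR K = \emptyset$ and $\succeqR K = \R$ for every nonempty $K$, so property $(\star)$ applied to a local peak $e_l \la_\alpha e \ra_\beta e_r$ collapses to the requirement that some $e'$ satisfy $e_l \rape[\Prog] e' \lape[\Prog] e_r$, which is exactly the strong-closure pattern assumed by the corollary.

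Putting the pieces together, every critical peak of $\Prog$ is (coinductively) decreasing \wrt\ $\succeq$ by hypothesis, so $\Prog$ is rule-decreasing and confluent by \cref{thm:chr:critical}. The only delicate point, and the one I would take care to check explicitly, is the degeneration of $(\star)$ into $\rape[\Prog] \cdot \lape[\Prog]$ when the strict part of the labeling preorder is empty; this amounts to a direct unfolding of the definitions of $\succR$ and $\succeqR$, but it is precisely what makes the reduction to \cref{thm:chr:critical} go through without any further overlap or joinability analysis.
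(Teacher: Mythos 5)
Your proposal is correct and is exactly the instantiation the paper intends (it leaves the corollary unproved, but the definition of a \emph{strongly rule-decreasing} program as a purely coinductive one points precisely at taking $\R_i=\emptyset$). Your explicit check that $(\star)$ degenerates to $\rape[\Prog]\cdot\lape[\Prog]$ when the strict part of the preorder is empty is the only non-trivial step, and you carry it out correctly.
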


The following examples show that the rule-decreasingness criterion is
more powerful than both local and strong confluence criteria.

\begin{example}
\label{example:diagrams:leq}
\renewcommand\Prog{\mathcal P_{\labelExampleLeq}}%
Consider the solver $\Prog$ for partial order given in
  \cref{example:leq}.
Since $\Prog$ is trivially non-terminating
one cannot apply local confluence criterion. Strong confluence
does not apply either, because of some non-strongly joinable critical
peaks. For instance, considere the peak given at \cref{example:peak:leq}:
\[
\sstate{x\eq y} \lap_\textit{anti.} \sstate{\mleq{x}{y}, \mleq{y}{x}}
\rap_\textit{trans.}  \sstate{\mleq{x}{y}, \mleq{y}{x}, \mleq{x}{x}}
\]
It can be seen that $\sstate{x\eq y}$ may not be reduced, and that
the right-hand side cannot be rewritten into the left-hand side in less
than two steps (e.g.\ by using $\textit{reflexivity}$ and
$\textit{antisymmetry}$ rules).

Nonetheless, confluence of $\Prog$ can be deduced using the full
generality of \cref{thm:chr:critical}. For this purpose, assume that
all rules except $\textit{transitivity}$ are inductive and take any
admissible preorder. Clearly the inductive part of $\Prog$ is
terminating.  Indeed the application of any one of the three first
rules strictly reduces the number of atoms in a state.  %
Then by a systematic analysis of all critical peaks of $\Prog$, we prove that each
  peak can be closed while respecting the
hypothesis of rule-decreasingness.
In fact all critical peaks can be closed without using 
$\textit{transitivity}$. %
   Some rule-decreasing diagrams involving the $\textit{transitivity}$ rule
   are given as examples in \Cref{figure:example:leq}.  %

\begin{figure}
\figrule
\begin{tikzpicture}[xscale=1,yscale=1.2,
forall/.style={->}, 
exists/.style={densely dotted,->}]

\node (Sc) at (0,0) {$\sstate{\mleq{x}{x}, \mleq{x}{y}}$};
\node (S1) at (-1,-1) {$\sstate{\mleq{x}{x}}$};
\node (S2) at (1,-1) {$\qquad \sstate{\mleq{x}{x}, \mleq{x}{y}, \mleq{x}{x}}$};
\node (S2') at (0,-2) {$\sstate{\mleq{x}{y}, \mleq{y}{x}}$};

\draw[forall] (Sc) to node [left] {$anti.$}  (S1);
\draw[forall] (Sc) to node [right] {$\,trans.$}  (S2);
\draw[exists] (S2) to node [right] {$\,reflex.$}  (S2');
\draw[exists] (S2') to node [left] {$anti.$}  (S1);

\node (Sc) at (6,0) {$\sstate{\mleq{x}{y}, \mleq{y}{z}, \mleq{z}{y} }$};
\node (S1) at (5,-1) {$\sstate{\mleq{x}{y}, y \eq z}\qquad $};
\node (S2) at (7,-1) {$\qquad \qquad \qquad \sstate{\mleq{x}{y}, \mleq{y}{z}, \mleq{z}{y}, \mleq{x}{z}}$};
\node (S2') at (6,-2) {$\sstate{\mleq{x}{y}, \mleq{x}{z}, y\eq z}$};

\draw[forall] (Sc) to node [left] {$anti.$}  (S1);
\draw[forall] (Sc) to node [right] {$\,trans.$}  (S2);
\draw[exists] (S2) to node [right] {$\,anti.$}  (S2');
\draw[exists] (S2') to node [left] {$dupl.$}  (S1);
\end{tikzpicture}
\caption{Some rule-decreasing critical peaks for $\Prog$}
\label{figure:example:leq}
\figrule
\end{figure}

\end{example}

\begin{example}
\newcommand\frk{\textbf{frk}}%
\newcommand\eat{\textbf{eat}}%
\newcommand\thk{\textbf{thk}}%
\renewcommand\frk{\textbf{f}}%
\renewcommand\eat{\textbf{e}}%
\renewcommand\thk{\textbf{t}}%
\label{example:diagrams:philos}%
\renewcommand\Prog{\mathcal P_{\labelExamplePhilos}}%
Consider the program $\Prog$ implementing the dining philosophers
  problem, as given in \cref{example:chr:philos}. 
The confluence of $\Prog$ cannot be inferred by
either local or strong confluence.  On the one hand, $\Prog$ is
obviously non-terminating, and hence prevents
 the application of the local confluence criterion. On the other
hand, $\Prog$ has critical peaks which are not in $(\raps\!\!\cdot\laps)$.
Consider as an example the peak given in \cref{figure:example:philos}.
It is critical for the rule $eating$ with itself, but it is not
joinable by $(\rape\!\! \cdot \lape)$. %
However, the figure shows that it is joinable by \[
\rape_{thk}\cdot \rape_{eat}\cdot \rape_{thk}\cdot \lape_{thk}\cdot
\lape_{eat}\cdot \lape_{thk}\]%
\ie\ the peak is decreasing. In fact, all the critical peaks of $\Prog$
involve only the rule $eat$ and may be closed in a similar manner.
Thus, by assuming that the $eat$ rule is coinductive and
strictly greater than $thk$, we can infer, using
\cref{thm:chr:critical}, that $\Prog$ is confluent.

\begin{figure}
\figrule
\begin{tikzpicture}[xscale=1,yscale=0.95,
forall/.style={->}, 
exists/.style={densely dotted,->}]
\def\xA{3.5} 
\def\yA{-1.5} 
\def\yB{-3} 
\def\yC{-4.5} 
\def\yD{-6} 

\node (Sc) at (0,0) {$\sstate{ \frk{(x)},\frk{(y)}, \frk{(z)},\thk{(x,y,i)},
  \thk{(y,z,j)}}$};

\node (S1) at (-\xA, \yA)
  {$\sstate{\frk{(z)}, \eat{(x,y,i+1)}, \thk{(y,z,j)}}$} ;
\node (S1') at (-\xA, \yB) 
   {$\sstate{ \frk{(x)},\frk{(y)}, \frk{(z)},\thk{(x,y,i+1)},
  \thk{(y,z,j)}}$};
\node (S1'') at (-\xA,\yC) {$\sstate{\frk{(x)}, \thk{(x,y,i+1), \eat{(y,z,j+1)}}}$};

\node (S2) at (\xA,\yA) {$\sstate{\frk{(x)}, \thk{(x,y,i), \eat{(y,z,j+1)}}}$};
\node (S2') at (\xA, \yB) 
   {$\sstate{ \frk{(x)},\frk{(y)}, \frk{(z)},\thk{(x,y,i)},
  \thk{(y,z,j+1)}}$};
\node (S2'') at (\xA, \yC)
  {$\sstate{\frk{(z)}, \eat{(x,y,i+1)}, \thk{(y,z,j+1)}}$} ;

\node (Sj) at (0, \yD)
{$\sstate{ \frk{(x)},\frk{(y)}, \frk{(z)},\thk{(x,y,i+1)},
    \thk{(y,z,j+1)}}$};

\draw[forall] (Sc) to node [above left] {$eat$}  (S1);
\draw[forall] (Sc) to node [above right] {$eat$}  (S2);
\draw[exists] (S1) to node [left] {$thk$} (S1');
\draw[exists] (S1') to node [left] {$eat$} (S1'');
\draw[exists] (S1'') to node [left] {$thk$} (Sj);
\draw[exists] (S2) to node [right] {$thk$} (S2');
\draw[exists] (S2') to node [right] {$eat$} (S2'');
\draw[exists] (S2'') to node [right] {$thk$} (Sj);
\end{tikzpicture}
\caption{A rule-decreasing critical peak of $\Prog$}
\label{figure:example:philos}
\figrule
\end{figure}
\end{example}

\subsection{On program partitioning} 
\label{subsection:diagrams:partition}

The rule-decreasingness criterion is based on the division of the
program into a terminating part and a possibly non-terminating one.
Since a program can be partitioned in multiple ways, it may be the
case that the rule-decreasingness of a program depends on the
splitting used (see \cref{example:chr:counter}).  From a purely
theoretical point of view, this is not a particular drawback, since
the property we aim at proving (i.e. the confluence of program) is
undecidable.
 From
a more pragmatical point of view, it appears that the classic
examples of CHR programs can be proved to be rule-decreasing without any
assumption of termination. In particular, we were unable to find a counterexample of a confluent but non-strongly rule-decreasing program in
\citeANP{Fruehwirth09cambridge}'s book
\citeyear{Fruehwirth09cambridge}.

\begin{example}
  Consider the CHR solver for partial order given in
  \cref{example:leq}. Assuming that any rule is coinductive, $\Prog$ can
  be shown strongly rule-decreasing with respect the order $\succeq$
  satisfying:
\[
\textit{transitivity} \succ  \textit{duplicate}  \succ  \textit{antisymmetry} \succ \textit{reflexivity}
\]
As illustrated by \Cref{figure:example:leq}, critical peaks involving
$\textit{transitivity}$ rules may be closed using only rules that are
strictly smaller. Similarly, one can verify that any critical peak
between a given rule $\alpha$ and a smaller (or equal) one can be
closed using only rules strictly smaller than $\alpha$ (\ie\ all the peaks
are trivialy decreasing).
\end{example}

The choice of a good partition may simplify proofs of
rule-decreasingness: by maximizing the inductive part of a program,
the number of peaks which must be proved decreasing (\ie\ 
the coinductive critical peaks) is reduced.
Indeed, while the joinability of a peak with respect to the inductive
part of program -- which must be terminating -- is a decidable problem and can be efficiently
automatized,\footnote{See the works about CHR local confluence%
~\cite{AFM99constraints,Abdennadher97cp}.} the
rule-decreasingness of a peak with respect to
a possibly non-terminating program is likely to be
undecidable.\footnote{Decreasingness of a peak for a given order seems
  a more difficult problem than joinability without termination
  assumption---which is itself undecidable.} Consequently, a good
partition will limit the use of heuristics or human interactions
necessary to infer a rule-decreasing diagram for each coinductive
critical peak.

Since termination is also an undecidable property, we
cannot expect to fully automatize the search for the optimal partition,
and we must content ourselves with heuristic procedures. 
Despite the fact that the formal development of such procedures is
beyond the scope of
this paper, our practical experience suggests that a trivial partitioning may be
interesting.  This partition consists of  considering as inductive only
those rules that
strictly reduce the number of atoms in a state.
Even if this
choice is not necessarily optimal and may even produce bad
partitions, it does seem to produce relevant partitions for
typical CHR solvers, as illustrated by \cref{example:diagrams:leq}.

\medskip 

We now give two counterexamples. The first shows that
rule-decreasingness can be dependent on particular splittings, while
the second presents a confluent program which is not rule-decreasing.

\begin{example}%
\newcommand\Pp{\textbf{p}}%
\newcommand\Fs{\textup{s}}%
\label{example:chr:counter}%
\xdef\labelExampleCounter{\thetheorem}%
\renewcommand\Prog{\mathcal P_{\labelExampleCounter}}%
Consider the following {CHR} rules:
\[
\textit{duplicate} \at \Pp(x) \backslash \Pp(x) \!\simplif\! \top 
 \qquad
\Fs^- \at \Pp(\Fs(x))\! \simplif \!\Pp(x) 
 \qquad
\Fs^+ \at \Pp(x)\! \simplif\! \Pp(\Fs(x)) 
\] 
We denote by $\Prog^-$ the program built from the $\textit{duplicate}$
and $\Fs^-$ rules, and by $\Prog^+$ the program built from 
the $\textit{duplicate}$ and $\Fs^+$ rules. %

$\Prog^-$ is clearly terminating: the $\textit{duplicate}$ rule strictly
reduces the number of atoms in a state, while $\Fs^-$ leaves the number of
atoms unchanged, but strictly reduces the size of the argument of one
of them.  We can also verify that  $\Prog^-$ has a single critical
peak. \cref{figure:example:counter1} shows the only way this peak may 
be closed. %
Thus, by assuming that all rules are inductive, we can infer that the
program is rule-decreasing. However if $\Fs^-$ is assumed to be
coinductive, we can verify that the sole critical peak of $\Prog^-$ is
decreasing with respect to no admissible order.

\begin{figure}
\figrule
\begin{minipage}[b]{0.49\linewidth}\centering%
\begin{tikzpicture}[xscale=1.5,yscale=1,
forall/.style={->},
exists/.style={densely dotted,->}]

\node (Sc) at (0, 0.4)  {$\sstate{\Pp(\Fs(x)), \Pp(\Fs(x))}$}; 
\node (S1) at (-1, -1)   {$\sstate{\Pp(\Fs(x))}$};
\node (S2) at (1, -1)   {$\sstate{\Pp(\Fs(x)), \Pp(x)}$};
\node (S2') at (1, -2.3) {$\sstate{\Pp(x), \Pp(x)}$};

\node (Sj) at (-1, -2.3)    {$\sstate{\Pp(x)}$};

\draw[forall] (Sc) to node [above left] {$\Pp$} (S1);
\draw[forall] (Sc) to node [above right] {$\Fs^-$}  (S2);
\draw[exists] (S2) to node [right]  {$\Fs^-$}  (S2');
\draw[exists] (S1) to node [left] {$\Fs^-$} (Sj);
\draw[exists] (S2') to node [above] {$\Pp$}  (Sj);
\end{tikzpicture}
\caption{Critical peak of $\Prog^-$}
\label{figure:example:counter1}
\end{minipage}
\begin{minipage}[b]{0.49\linewidth}\centering%
\begin{tikzpicture}[xscale=1.6,yscale=1,
forall/.style={->},
exists/.style={densely dotted,->}]

\node (Sc) at (0, 0.4)  {$\sstate{\Pp(x), \Pp(x)}$}; 
\node (S1) at (-1, -1)   {$\sstate{\Pp(x)}$};
\node (S2) at (1, -1)   {$\sstate{\Pp(\Fs(x)), \Pp(x)}$};
\node (S2') at (1, -2.3) {$\sstate{\Pp(\Fs(x)), \Pp(\Fs(x))}$};

\node (Sj) at (-1, -2.3)    {$\sstate{\Pp(\Fs(x))}$};

\draw[forall] (Sc) to node [above left] {$\Pp$} (S1);
\draw[forall] (Sc) to node [above right] {$\Fs^+$}  (S2);
\draw[exists] (S2) to node [right]  {$\Fs^+$}  (S2');
\draw[exists] (S1) to node [left] {$\Fs^+$} (Sj);
\draw[exists] (S2') to node [above] {$\Pp$}  (Sj);
\end{tikzpicture}
\caption{Critical peak of $\Prog^-$}
\label{figure:example:counter2}
\end{minipage}
\figrule

\end{figure}

As in the case of $\Prog^-$, $\Prog^+$ yields only one critical peak which is
decreasing with respect to no admissible order (see
\cref{figure:example:counter2}). However, this time $\Fs^+$ is not
terminating, and so cannot been assumed inductive. Consequently $\Prog^+$
cannot be inferred to be confluent using \cref{thm:chr:critical}.

\end{example}

\section{Modularity of  {CHR} confluence}
\label{section:modularity}

In this section, we are concerned with proving the confluence of union of
confluent programs in a modular way (in particular of those programs proved
confluent using the rule-decreasing criterion). In practice, we
improve on a result of \linebreak \citeN{Fruehwirth09cambridge} which states that a
terminating union of confluent programs which do not overlap (\ie\
which do not have a critical peak) is confluent. In particular, we
allow some overlapping and we drop the termination hypotheses.

\begin{theorem}[Modularity of confluence]
\label{theorem:modularity}
Let $\Prog$ and $\Q$ be two confluent {CHR} programs.  If any critical
peak between $\Prog$ and $\Q$ is joinable by $\raps[\Q] \cdot \lape[\Prog]$,
then $\Prog\Q$ is confluent.
\end{theorem}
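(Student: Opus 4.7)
The plan is to reduce the confluence of $\Prog\Q$ to the commutation of the rewrites $\rap[\Prog]$ and $\rap[\Q]$ and then to invoke the classical lemma of Hindley--Rosen for confluent commuting relations.

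First I would establish the local commutation
\[
\lap[\Prog] \cdot \rap[\Q] \;\subseteq\; \raps[\Q] \cdot \laps[\Prog]
\]
by analysing an arbitrary local peak $e_l \lap[\Prog] e \rap[\Q] e_r$, using the same redex-overlap case analysis as in the proof of \cref{thm:chr:critical}. If the two redexes apply to disjoint parts of $e$, monotonicity of CHR directly yields a single-step valley $e_l \rap[\Q] e' \lap[\Prog] e_r$. Otherwise, up to state equivalence and a common context $R$, the peak is an instance of a critical peak between $\Prog$ and $\Q$; the hypothesis then supplies a $\raps[\Q] \cdot \lape[\Prog]$-valley, which monotonicity lifts into the required closure of $e_l$ and $e_r$.

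Next I would promote this to
\[
\lap[\Prog] \cdot \raps[\Q] \;\subseteq\; \raps[\Q] \cdot \laps[\Prog]
\]
by induction on the length of the $\raps[\Q]$-sequence. Given $u \lap[\Prog] v \rap[\Q] v' \raps[\Q] w$, the local step yields $u \raps[\Q] u' \lape[\Prog] v'$, and splitting on whether $u' = v'$ or $u' \lap[\Prog] v'$ either closes immediately or reduces to the induction hypothesis applied to the shorter $\raps[\Q]$-suffix. An application of \cref{lemma:terese}(i) then upgrades this to the full commutation
\[
\laps[\Prog] \cdot \raps[\Q] \;\subseteq\; \raps[\Q] \cdot \laps[\Prog].
\]

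Since $\rap[\Prog]$ and $\rap[\Q]$ are both confluent by hypothesis and now commute, Hindley--Rosen's lemma~\cite{Terese03} yields the confluence of $\rap[\Prog] \cup \rap[\Q] = \rap[\Prog\Q]$, as required. The main obstacle that I anticipate is the inductive extension: because the critical peak hypothesis only supplies a $\lape[\Prog]$ tail rather than a full $\laps[\Prog]$ one, the induction cannot be carried out blindly and requires the delicate case split sketched above. A secondary subtlety is verifying rigorously that every overlapping local peak between the two programs really does reduce, modulo state equivalence and a common context $R$ on which monotonicity acts, to a critical peak in the precise sense of the paper; this, however, is essentially the same contextualisation argument already carried out in the proof of \cref{thm:chr:critical}.
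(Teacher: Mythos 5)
Your proposal is correct and follows essentially the same route as the paper: its Lemma~\cref{lemma:modularity} is precisely your strong-commutation step, proved by the same induction on the length of the $\raps[\Q]$-derivation and relying, as you note, on the critical-peak hypothesis yielding a single-step $\lape[\Prog]$ tail so that the residual peak stays local, after which Lemma~\cref{lemma:terese}{\it (i)} lifts this to full commutation. The only cosmetic difference is the final step, where the paper obtains the Hindley--Rosen conclusion by a trivial application of \cref{thm:decreasing} to $\raps[\Prog]$ and $\raps[\Q]$ followed by Lemma~\cref{lemma:terese}{\it (ii)}, rather than citing Hindley--Rosen directly.
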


Before formally proving the theorem, it is worth noting that, despite
the fact that modularity of confluence and the rule-decreasing theorem have similar
flavors, both results have different scopes.
Indeed, on the one hand modularity of confluence does not assume anything about the way in which $\Prog$ and $\Q$ are confluent. For instance, if $\Prog$
and $\Q$ are two rule-decreasing programs, \cref{theorem:modularity}
does not require the union of the inductive parts of $\Prog$ and $\Q$ to
be terminating, while \cref{thm:chr:critical} does.
This is important since, 
termination is not a modular property: even if two terminating
programs do not share any user-defined atoms, one cannot be sure that
their union is terminating.  (See %
Section 5.4 of \citeANP{Fruehwirth09cambridge}'s book
\citeyear{Fruehwirth09cambridge} for more details.)
On the other hand, the rule-decreasing criterion allows the critical peaks to
be closed in a more complex way than  \cref{theorem:modularity} permits.

\medskip

The proof of the theorem rests on the following lemma, which states that
under the hypotheses of \cref{theorem:modularity}, $\rap$ ``strongly
commutes'' with $\rap[\Q]$.

\begin{lemma}
\label{lemma:modularity}
  If critical peaks between $\Prog$ and $\Q$ are in $\raps[\Q] \!\cdot
  \lape[\Prog]$, then $(\lap[\Prog]\!\cdot\raps[\Q])\subset
  (\raps[\Q]\!\cdot\lape[\Prog])$.
\end{lemma}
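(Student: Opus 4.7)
The plan is to establish the lemma in two stages: first a single-step strong commutation
\[
(\lap[\Prog] \cdot \rap[\Q]) \subset (\raps[\Q] \cdot \lape[\Prog]),
\]
and then bootstrap to the multi-step version by induction on the length of the $\Q$-reduction.

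For the single-step statement, I would consider an arbitrary peak $S_1 \lap[\Prog] S \rap[\Q] S_2$ and split into the same two cases as in the proof of \cref{thm:chr:critical}: either the two rule applications operate on disjoint parts of $S$, or their redexes overlap. In the disjoint case, monotonicity of CHR immediately yields a valley $S_1 \rap[\Q] T \lap[\Prog] S_2$, which already sits in $\raps[\Q] \cdot \lape[\Prog]$. In the overlapping case, there is a critical peak $R_1 \lap[\Prog] S_c \rap[\Q] R_2$ between $\Prog$ and $\Q$, together with a context state $R$ and a set $\vec x$ of variables, such that $S \equiv S_c \oplus_{\vec x} R$, $S_1 \equiv R_1 \oplus_{\vec x} R$, and $S_2 \equiv R_2 \oplus_{\vec x} R$. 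By hypothesis the critical peak is closed as $R_1 \raps[\Q] U \lape[\Prog] R_2$ for some $U$, and monotonicity lifts this closing to the peak at $S$ by conjoining with $R$.

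For the bootstrap, I would induct on the length $n$ of the right leg of the peak $S_1 \lap[\Prog] S \rap[\Q]^n S_n$. The base $n=0$ is immediate, as $\lap[\Prog] \subset \lape[\Prog]$. For the inductive step, decompose the right leg as $S \rap[\Q] S' \rap[\Q]^{n-1} S_n$. The single-step result, applied to $S_1 \lap[\Prog] S \rap[\Q] S'$, yields some $T$ with $S_1 \raps[\Q] T \lape[\Prog] S'$. If $T \equiv S'$, simply concatenate $S_1 \raps[\Q] T \equiv S' \rap[\Q]^{n-1} S_n$ and close the right leg by reflexivity. Otherwise $T \lap[\Prog] S'$, and the peak $T \lap[\Prog] S' \rap[\Q]^{n-1} S_n$ has a strictly shorter right leg; the induction hypothesis supplies $T' $ with $T \raps[\Q] T' \lape[\Prog] S_n$, and concatenating $S_1 \raps[\Q] T \raps[\Q] T' \lape[\Prog] S_n$ closes the original peak.

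The main obstacle, and the reason the lemma cannot be reduced to a direct application of \cref{lemma:terese}~(i), is preserving the single-step decoration $\lape[\Prog]$ on the right leg throughout the induction: exactly one $\Prog$-step on the left must be matched by \emph{at most one} $\Prog$-step on the right, which is precisely what makes the second sub-case above again a single-$\Prog$-step peak and so amenable to the induction hypothesis. Iterating an ordinary commutation would only produce a multi-step $\Prog$-reduction on the right, which would be insufficient for the subsequent proof of \cref{theorem:modularity}.
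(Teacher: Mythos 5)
Your proof is correct and follows essentially the same route as the paper's: an induction on the length of the $\Q$-derivation combined with a single-step case analysis (disjoint redexes handled by monotonicity, overlapping redexes handled by the critical-peak hypothesis plus monotonicity), the only cosmetic difference being that you peel off the first $\Q$-step while the paper peels off the last. Your closing observation---that the whole point is to preserve the ``at most one $\Prog$-step'' decoration through the induction, which a bare appeal to \cref{lemma:terese}~(i) would lose---is exactly the right justification for why the lemma is stated and proved in this strong single-step form.
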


\begin{proof}
  We prove by induction on the length of the derivation $S_c \raps[\Q]
  S'$ that for any peak $S \lap[\Prog] S_c \raps[\Q] S'$, the property
  $S \raps[\Q] \cdot\lape[\Prog] S'$ holds.  The base case $S_c \equiv
  S'$ is immediate. For the inductive case $S \lap[\Prog] S_c \raps[\Q]
  S'' \rap[\Q] S'$, we know by the induction hypothesis that there exists a
  state $R$, such that $S \raps[\Q] R \lape[\Prog] S''$. From here, it
  is sufficient to prove that $R \raps[\Q] \cdot \lape[\Prog] S'$ and to use
  the definition of relation composition in order to conclude.  We assume that $S''
  \rap[\Q] R$, otherwise $R \raps[\Q] \cdot \lape[\Prog] S'$ holds
  trivially.  We distinguish two cases: either the rules involved in
  the local peak $R \lap[\Prog] S'' \rap[\Q] S'$ apply to different parts
  of $S''$, or else their applications overlap.  In the first case, we use
  {CHR} monotonicity to infer $R \rap[\Q] \cdot \lap[\Prog] S''$.  In the
  second case, there must exist a critical peak $R'' \lap[\Prog] \cdot
  \rap[\Q] S'''$, a state $R'$, and a set of variables $\vec x$, such
  that $R'' \oplus_{\vec x} R' \equiv R$, $S''' \oplus_{\vec x}
  R'\equiv S'$. Then by the hypotheses and {CHR} monotonicity, we obtain the results that $R \raps[\Q] \cdot \lape[\Prog] S'$.
\end{proof}

\begin{proof}[Proof of \cref{theorem:modularity}]
  Let $\mathord{\ra_1} = \mathord{\raps[\Prog]}$, $\mathord{\ra_2} =
  \mathord{\raps[\Q]}$. On one hand, by the confluence of $\Prog$ and
  $\Q$, we have $(\mathord{\la_1}\!\cdot\mathord{\ra_1})\subset
  (\mathord{\ra_1}\!\cdot\mathord{\la_1})$ and
  $(\mathord{\la_2}\!\cdot\mathord{\ra_2})\subset
  (\mathord{\ra_2}\!\cdot\mathord{\la_2})$. (Note that
  $\mathord{\ras_1} = \mathord{\ra_1}$ and $\mathord{\ras_2} =
  \mathord{\ra_2}$.)  On the other hand, by combining
  \cref{lemma:modularity} and case {\it (i)} of \cref{lemma:terese},
  we infer $(\mathord{\la_1}\!\cdot\mathord{\ra_2})\subset
  (\mathord{\ra_2}\!\cdot\mathord{\la_1})$. By a trivial application
  of \cref{thm:decreasing}, we find that $\ra_{\{1,2\}}$ is
  confluent. We conclude by noting $\mathord{\rap[\Prog\Q]}
  \subset\mathord{ \ra_{\{1,2\}}} \subset\mathord{\raps[\Prog\Q]}$, and
  apply case {\it(ii)} of \cref{lemma:terese}. (It is worth noting
  that $\mathord{ \ra_{\{1,2\}}}$ equals neither
  $\mathord{\rap[\Prog\Q]}$ nor $\mathord{\raps[\Prog\Q]}$.)
\end{proof}
\section{Conclusion}
\label{section:conclusion}

By employing the decreasing diagrams technique
 in {CHR}, we have
established a new criterion for {CHR} confluence that generalizes
local and strong confluence criteria.  
The crux of this novel criterion rests on the distinction between the
terminating part (the so-called inductive part) and  non-terminating
part (the so-called coinductive part) of a program, together with the
labeling of transitions by rules. %
Importantly, we demonstrate that in the particular case of the
proposed application of the decreasing diagrams, the check
on decreasingness can be restricted to the sole critical pairs, hence
making it possible to automatize the process.
We also improve on a result about the so-called modularity of
confluence, which allows a modular combination of rule-decreasing
programs, without loss of confluence.

It is worth saying that all the diagrammatic proofs sketched in the
paper have been systematically verified by a prototype of a
diagrammatic confluence checker. 
In practice, this checker automatically generates all the critical
pairs of a program provided with an admissible order, then using
user-defined {\em tactics} (finit sets of reductions) tries to
join these while respecting rule-decreasingness.

Current work involves investigating the development of heuristics to
automatically infer rule-decreasingness without human interaction.  We
also plan to develop a new completion procedure based on the criterion
presented here.  %
Because duplicate removal is an important programming idiom
  of CHR, the development of new confluence-proof techniques capable of
  dealing with confluent but non-rule-decreasing programs, like those given in
  \cref{example:chr:counter}, is also worth investigating.

\clearpage 
\bibliographystyle{acmtrans}
\bibliography{biblio}

\end{document}